\newtheorem{theorem}{Theorem}[section]
\newtheorem{lemma}[theorem]{Lemma}
\theoremstyle{definition}
\newtheorem{definition}[theorem]{Definition}
\theoremstyle{remark}
\numberwithin{equation}{section}
\begin{document}
% \title[short text for running head]{full title}
\title{Multiple polylogarithms and linearly reducible \\ Feynman graphs}

%    Only \author and \address are required; other information is
%    optional.  Remove any unused author tags.

%    author one information
% \author[short version for running head]{name for top of paper}
\author{Christian Bogner and Martin L\"uders}
\address{Institute of Physics, Humboldt Universit\"at
zu Berlin\\
Unter den Linden 6, 10099 Berlin, Germany}
\curraddr{}
\email{bogner@math.hu-berlin.de, lueders@physik.hu-berlin.de}
\thanks{The first author is grateful to Francis Brown for useful communication and hospitality in Paris Jussieu, supported by ERC grant no. 257638. We thank Erik Panzer for useful communication and Dirk Kreimer's group at Humboldt University for hospitality and support. Our graphs were drawn using \cite{HahLan}.}

\subjclass[2000]{Primary }
%    The 2010 edition of the Mathematics Subject Classification is
%    now available.  If you are citing a classification from the
%    new scheme, use the following input coding instead.
%\subjclass[2010]{Primary }

\date{}

\begin{abstract}
We review an approach for the computation of Feynman integrals by
use of multiple polylogarithms, with an emphasis on the related criterion
of linear reducibility of the graph. We show that the set of graphs
which satisfies the linear reducibility with respect to both Symanzik
polynomials is closed under taking minors. As a step towards a classification                         
of Feynman integrals, we discuss the concept of critical minors and exhibit 
an example at three loops with four on-shell legs.

\end{abstract}
\maketitle

%    Text of article.

\section{Introduction}

In recent years we witnessed rapid progress in the developement of
techniques for the computation of higher order corrections in perturbative
quantum field theory. While other talks at this conference cover progress
in the computation of entire amplitudes, our talk refers to the
'classical' approach of computing the amplitude by its Feynman graphs,
which is inevitable when meeting the needs of present collider
experiments. In this field of research, it has shown to be fruitful
to discuss Feynman integrals in their own right, without restrictions
to a particular quantum field theory.

Computations of higher order corrections to observables often
start from the consideration of hundreds or thousands of Feynman integrals
with tensor structure, and proceed via effective standard procedures
to reduce the problem, possibly to a relatively small number of scalar
integrals. At higher loop-orders, the evaluation of the latter remains
to be the hard part of the problem. There is no algorithm which would
succeed in the analytical computation of every Feynman integral. However,
there is a variety of powerful methods which have been useful for
a wide range of relevant cases, such as the Mellin-Barnes approach (see \cite{BerLam, Uss, Tau, Smi}),
the expansion of hypergeometric functions \cite{MocUweWei, HubMai, HubMai2}, differential equation
methods \cite{Kot, Rem, GehRem, MueWeiZay, MueWeiZay2}, difference equations \cite{Lap, Tar, Tar2, Lee} or position-space
methods \cite{CheKatTka} (also see \cite{Gra}). In this talk we focus on the approach of iteratively integrating
out Feynman parameters by use of multiple polylogarithms.

In order to choose an appropriate strategy for the computation of
a given Feynman integral, it would be desirable in general, to know
in advance, which are the classes of functions and numbers the integral
may evaluate to. As a slightly more refined question of this type
we may ask: Which scalar Feynman integrals can be expressed by multiple
polylogarithms and multiple zeta values, and for which integrals do
we need a wider range of functions and numbers? In the past few years,
questions of this type turned out to define a fruitful common field
of research for quantum field theorists and algebraic geometers alike.
While the physicist's interest in these questions is given by the
desire to compute specific integrals or to learn about the 'number
content' of a given quantum field theory, the mathematician arrives
at the same question from a different direction. In a very general
context, Feynman integrals can be viewed as period integrals, and
the question of evaluating to multiple zeta values is related to the
question wether an underlying motive is mixed Tate over $\mathbb{Z}$ 
(see \cite{BloEsnKrei, Bro08, Bro09, BroSch, AluMar, AluMar2, AluMar3}).

A definite classification of Feynman graphs with respect to the above
questions is missing. However, for vacuum and two-point graphs important
progress was made by considering the first Symanzik polynomial, given
by the Feynman parametric representation of a Feynman integral. Even
though many vacuum-type Feynman integrals evaluate to multiple zeta
values \cite{BroaKre, BroaKre2}, this is not the case in general. A first vacuum graph whose
period has to belong to a set of numbers beyond multiple zeta values
was exhibited in a recent article by Brown and Schnetz \cite{BroSch} (also see \cite{BroDor}). When allowing
the Feynman integrals to depend on kinematical invariants and particle
masses, we can ask for graphs where multiple polylogarithms are not
sufficient to express the result. Here the first cases show up at
much lower loop-order, such as in the case of massive sunrise graphs
and related graphs with a cut through three massive edges (see e.g. \cite{Bauetal, Bauetal2}).

In this talk we review a criterion on graphs which is related to the
above questions and show that if a graph satisfies the criterion,
its minors do so as well. In graph theory such a minor monotony is an important and desireable feature. 
In section 2 we begin with a brief reminder
on scalar Feynman integrals, their two Symanzik polynomials and the
approach of integrating out Feynman parameters by use of multiple
polylogarithms. In section 3 we briefly review the criterion of linear
reducibility of a graph, which is used to decide whether a given integral
can be computed by use of the method. If this is the case, the functions
and numbers in all intermediate steps and in the result will not exceed
combinations of multiple polylogarithms and their values at rational points.
In this way the criterion and the corresponding algorithm are useful
tools for adressing the above questions. In the case of integrals
only involving the first Symanzik polynomial, the criterion was extensively
studied in \cite{Bro08, Bro09}. As the iterated integration over Feynman parameters
can be expected to be useful in the case of integrals depending on
kinematical invariants and particle masses as well, we intend to extend
the discussion to the second Symanzik polynomial. In section 4 we
consider linear reducibility with respect to both Symanzik polynomials
and show that the set of linearly reducible graphs is closed under
taking minors. This property is useful for a classification, as it
allows us to characterize families of reducible graphs by a small number of graphs not belonging to the family. In a case study we exhibit
such a 'forbidden minor' at the level of massless three-loop graphs
with four on-shell legs. Section 5 contains our conclusions.

%=========================================================================================================

\section{Multiple Polylogarithms and Feynman Integrals}

In this section we recall some general facts about Feynman integrals,
Symanzik polynomials and a method to compute period integrals by use
of multiple polylogarithms. Let us begin with a generic Feynman graph
$G$ with $n$ edges, loop-number (i.e. first Betti number) $L\geq1$
and with $r$ external half-edges (or 'legs'). We label each edge $e_{i}$
by an integration variable $\alpha_{i}$ (Feynman parameter), an integer
$\nu_{i}$ (exponent of the Feynman propagator), a real or complex
variable $m_{i}$ (particle mass). Each leg is labelled by a vector
$p_{j}$ (external momentum). 

To this labelled graph $G$ we associate the scalar Feynman integral in Dimensional Regularization:\begin{equation}
I_{G}=\frac{\Gamma(\nu-LD/2)}{\prod_{j=1}^{n}\Gamma(\nu_{j})}\int_{\alpha_{j}\geq0}\delta\left(1-\sum_{i=1}^{n}\alpha_{i}\right)\left(\prod_{j=1}^{n}d\alpha_{j}\alpha_{j}^{\nu_{j}-1}\right)\frac{\mathcal{U}_{G}^{\nu-(L+1)D/2}}{\mathcal{F}_{G}{}^{\nu-LD/2}}\label{eq:generic I_G}\end{equation}
where $\nu=\sum_{i=1}^{n}\nu_{i}.$ (We omit to write a trivial prefactor
by which the integral becomes independent of the physical mass-scale.)
The Feynman integral $I_{G}$ and the function $\mathcal{F}_{G}$
depend on the particle masses and on certain kinematical invariants,
which are quadratic functions of the external momenta. The functions
$\mathcal{U}_{G}$ and $\mathcal{F}_{G}$ are the first and second
Symanzik polynomial of the graph. A definition is given below. Usually
a Feynman integral is associated to a Feynman graph by Feynman rules
in momentum or position space, and we refer to the literature \cite{ItzZub, Nak} for
the standard computation leading from there to the Feynman parametric
representation given in eq. \ref{eq:generic I_G}. 

Eq. \ref{eq:generic I_G} defines a very general class of integrals
which deserves our attention for several reasons. Firstly, the class
contains the Feynman integrals of scalar quantum field theory such
as $\phi^{3}-$ or $\phi^{4}-$theory. Secondly, any Feynman integral
with a tensor-structure, arising from a physical quantum field theory,
can in principle be expressed in terms of scalar integrals of the
above class \cite{Tar, Tar2}. Thirdly, as we allow the $\nu_{j}$ to take arbitrary
integer values, there are well-known identities between these scalar
integrals which can be used for efficient reduction procedures \cite{CheTka}. As
a consequence, integrals of the above class appear in a wide range
of physical set-ups and their evaluation is the bottleneck of many
computational problems in particle physics.

The parameter $D$ can either be fixed to the integer space-time dimension
or, as the integral is very often ill-defined in the desired dimension,
one may consider $I_{G}$ in Dimensional Regularisation where
$D$ is a complex variable. Then, in order to separate the pole-terms
and obtain finite contributions in four-dimenional Minkowski space,
one usually attempts to compute the coefficients of a Laurent-expansion
\[
I_{G}=\sum_{j=j_{0}}^{\infty}c_{j}\epsilon^{j},\]
with $D=4-2\epsilon,$ to a desired order. Even though the computation
of the functions $c_{j}$ can be very difficult, we can make a general
statement about them. It is shown in \cite{BogWei09} that if for an arbitrary
Feynman graph we evaluate any function $c_{j}$ at algebraic values
of the squared particle masses $m_{i}^{2}$ and kinematical invariants
$s_{i}$, where all $m_{i}^{2}\geq0$ and all $s_{i}\leq0,$ we obtain
a \emph{period} according to the definition of Kontsevich and Zagier \cite{KonZag}.
For the special case where the Feynman integral takes the form 
\begin{equation}
\mathcal{P}_{G}=\int_{\alpha_{j}\geq0}\delta\left(1-\sum_{i=1}^{n}\alpha_{i}\right)\left(\prod_{j=1}^{n}d\alpha_{j}\right)\frac{1}{\mathcal{U}_{G}^{D/2}}\label{eq:period integral}\end{equation}
this statement was already proven in \cite{BelBro}. It seems that Feynman integrals
in fact evaluate to a restricted subset of periods and it is an important challenge to understand which one this is.

Let us now recall the definition of the Symanzik polynomials of a
Feynman graph $G.$ The first Symanzik polynomial is defined as\begin{eqnarray*}
\mathcal{U}_{G} & = & \sum_{T}\prod_{e_{i}\notin T}\alpha_{i},\end{eqnarray*}
where the sum is over all spanning trees of the graph $G.$ The second
Symanzik polynomial is defined as \[
\mathcal{F}_{G}=\mathcal{F}_{0,\, G}+\mathcal{U}_{G}\sum_{i=1}^{n}\alpha_{i}m_{i}^{2}\]
with \[
\mathcal{F}_{0,\, G}=\sum_{(T_{1},\, T_{2})}\left(\prod_{e_{i}\notin(T_{1},\, T_{2})}\alpha_{i}\right)s_{(T_{1},\, T_{2})}.\]
Here the sum runs through all spanning two-forests $(T_{1},\, T_{2})$
of $G$, where $T_{1}$ and $T_{2}$ denote the connected components of
the forest.

In order to define the kinematical invariants $s_{(T_{1},\, T_{2})},$
we introduce an arbitrary orientation on $G.$ We firstly say that
each external momentum $p_{j}$ is incoming at the vertex at the corresponding
leg. We furthermore label each oriented edge by a momentum-vector
$q_{i}$. If the edge $e_{i}$ is oriented from vertex $v_{j}$ to
$v_{k}$ then $q_{i}$ is said to be incoming at $v_{k}$ and $-q_{i}$
is incoming at $v_{j}.$ Momentum-conservation on $G$ is reflected
in our labels by the condition that the sum of all external momenta
$p_{j}$ is zero, and at each vertex, the sum of all incoming momenta
is zero. By these conditions, except for $L$ momenta, each of the
$q_{i}$ can be expressed as a linear combination of external momenta.
The kinematical invariants are defined as \[
s_{(T_{1},\, T_{2})}=\left(\sum_{e_{j}\notin(T_{1},\, T_{2})}\pm q_{j}\right)^{2}\]
where the sign of $q_{j}$ is fixed by the condition that we sum over
the momenta incoming at the component $T_{2}.$ Note that by momentum
conservation, the $s_{(T_{1},\, T_{2})}$ are functions of the external
momenta. 

As an alternative to the above construction by spanning trees and
forests, there are several ways to obtain both Symanzik polynomials from determinants of 
certain matrices \cite{BogWei10, BloKre, Pat, Bro09}. To demonstrate such a derivation, let us label each
edge $e_{i}$ by an auxiliary variable $y_{i}$. Each vertex $v_{i}$
is labelled by \[
u_{i}=\left\{ \begin{array}{c}
z_{j}\textrm{ if a leg with incoming momentum }p_{j}\textrm{ is attached,}\\
0\textrm{ if no leg is attached.}\end{array}\right.\]
 For a Feynman graph with vertices $v_{1},\,...,\, v_{m}$
we consider an $m\times m$ matrix $M$ whose entries are: \[
M_{ij}=\left\{ \begin{array}{c}
u_{i}+\sum y_{k}\textrm{ for }i=j,\, e_{k}\textrm{ attached to }v_{i}\textrm{ at exactly one end,}\\
-\sum y_{k}\textrm{ for }i\neq j,\, e_{k}\textrm{ connecting }v_{i}\textrm{ and }v_{j}.\end{array}\right.\]
We compute the determinant \[
\mathcal{V}(y_{1},\,..,\, y_{n},\, z_{1},\,...,\, z_{r})=\textrm{det}(M)\]
and consider the function \[
\mathcal{W}(\alpha_{1},\,..,\,\alpha_{n},\, z_{1},\,...,\, z_{r})=\mathcal{V}\left(\alpha_{1}^{-1},\,..,\,\alpha_{n}^{-1},\, z_{1},\,...,\, z_{r}\right)\prod_{i=1}^{n}\alpha_{i}\]
which is a polynomial in the $\alpha-$ and $z-$variables. Note that
$M$ depends on a chosen ordering on the vertices but $\mathcal{W}$
does not. 

Let us assume that at least two legs are attached to the graph, i.e.
$r\geq2$. We expand $\mathcal{W}$ as $\mathcal{W}=\mathcal{W}^{(1)}+\mathcal{W}^{(2)}+...+\mathcal{W}^{(r)}$
where $\mathcal{W}^{(k)}$ is homogeneous of degree $k$ in the $z-$variables.
We can directly read off the first Symanzik polynomial from the first
term in this expansion, as it satisfies \[
\mathcal{W}^{(1)}(\alpha_{1},\,..,\,\alpha_{n},\, z_{1},\,...,\, z_{r})=\mathcal{U}_{G}(\alpha_{1},\,..,\,\alpha_{n})\sum_{i=1}^{r}z_{i}.\]
The massless second Symanzik polynomial $\mathcal{F}_{0,\, G}$ is
directly obtained from $\mathcal{W}^{(2)}$. By construction, $\mathcal{W}^{(2)}$
is homogeneous of degree 2 in the $z-$variables. We replace each
product $z_{i}z_{j}$ in $\mathcal{W}^{(2)}$ by the scalar-product
of the corresponding external momentum vectors $p_{i}\cdot p_{j}$.
By momentum-conservation, $\sum_{i=1}^{r}p_{i}=0,$ we express each
of the scalar products by the functions $s_{(T_{1},\, T_{2})}$. As result we obtain $\mathcal{F}_{0,\, G}$ \cite{BogWei10}.%
\begin{figure}
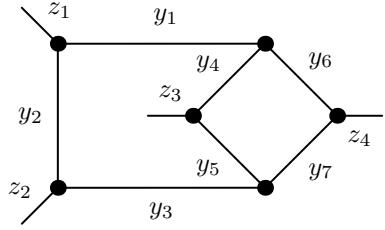

\begin{feynartspicture}(300,150)(1, 1)%
\FADiagram{} \FAProp(2.,13.)(2.,5.)(0.,){Straight}{0} \FALabel(1.18,9.)[r]{$y_2$} \FAProp(2.,5.)(13.5,5.)(0.,){Straight}{0} \FALabel(7.75,4.18)[t]{$y_3$} \FAProp(2.,13.)(13.5,13.)(0.,){Straight}{0} \FALabel(8.,14.02)[b]{$y_1$} \FAProp(13.5,13.)(9.5,9.)(0.,){Straight}{0} \FALabel(11.0606,11.4392)[br]{$y_4$} \FAProp(13.5,13.)(17.5,9.)(0.,){Straight}{0} \FALabel(15.9034,11.6243)[bl]{$y_6$} \FAProp(9.5,9.)(13.5,5.)(0.,){Straight}{0} \FALabel(11.0607,6.5606)[tr]{$y_5$} \FAProp(17.5,9.)(13.5,5.)(0.,){Straight}{0} \FALabel(15.9034,6.3756)[tl]{$y_7$} \FAProp(9.5,9.)(7.,9.)(0.,){Straight}{0} \FALabel(8.25,9.82)[b]{$z_3$} \FAProp(17.5,9.)(20.,9.)(0.,){Straight}{0} \FALabel(18.75,8.18)[t]{$z_4$} \FAProp(2.,13.)(0.,15.)(0.,){Straight}{0} \FALabel(1.4392,14.4392)[bl]{$z_1$} \FAProp(2.,5.)(-0.,3.)(0.,){Straight}{0} \FALabel(0.5607,4.4392)[br]{$z_2$} \FAVert(2.,13.){0} \FAVert(2.,5.){0} \FAVert(13.5,5.){0} \FAVert(13.5,13.){0} \FAVert(9.5,9.){0} \FAVert(17.5,9.){0}
\end{feynartspicture}%

\caption{The non-planar double-box\label{fig:The-non-planar-double-box}}

\end{figure}

As an example let us compute the two Symanzik polynomials of the massless
non-planar double-box, shown in figure \ref{fig:The-non-planar-double-box}
with auxiliary $y-$ and $z-$variables. For this graph and a chosen
ordering on the vertices we have \[
M=\left(\begin{array}{cccccc}
M_{11} & -y_{2} & 0 & 0 & -y_{1} & 0\\
-y_{2} & M_{22} & -y_{3} & 0 & 0 & 0\\
0 & -y_{3} & M_{33} & -y_{7} & 0 & -y_{5}\\
0 & 0 & -y_{7} & M_{44} & -y_{6} & 0\\
-y_{1} & 0 & 0 & -y_{6} & M_{55} & -y_{4}\\
0 & 0 & -y_{5} & 0 & -y_{4} & M_{66}\end{array}\right),\]
where $M_{11}=y_{1}+y_{2}+z_{1},$ $M_{22}=y_{2}+y_{3}+z_{2},$ $M_{33}=y_{3}+y_{5}+y_{7},$
$M_{44}=y_{6}+y_{7}+z_{4},$ $M_{55}=y_{1}+y_{4}+y_{6},$ $M_{66}=y_{4}+y_{5}+z_{3}.$
Proceeding in the described way we compute \begin{eqnarray*}
\mathcal{U}_{G} & = & (z_{1}+z_{2}+z_{3}+z_{4})^{-1}\mathcal{W}^{(1)}\\
 & = & (\alpha_{1}+\alpha_{2}+\alpha_{3})(\alpha_{4}+\alpha_{5}+\alpha_{6}+\alpha_{7})+(\alpha_{4}+\alpha_{5})(\alpha_{6}+\alpha_{7}),\\
\mathcal{F}_{0\, G} & = & \mathcal{W}^{(2)}|_{z_{i}z_{j}=p_{i}\cdot p_{j},\,\sum_{i=1}^{4}p_{i}=0}\\
 & = & -p_{1}^{2}\alpha_{2}(\alpha_{1}(\alpha_{4}+\alpha_{5}+\alpha_{6}+\alpha_{7})+\alpha_{4}\alpha_{6})\\
 &  & -p_{2}^{2}\alpha_{2}(\alpha_{3}(\alpha_{4}+\alpha_{5}+\alpha_{6}+\alpha_{7})+\alpha_{5}\alpha_{7})\\
 &  & -p_{3}^{2}(\alpha_{4}\alpha_{5}(\alpha_{1}+\alpha_{2}+\alpha_{3}+\alpha_{6}+\alpha_{7})+\alpha_{3}\alpha_{4}\alpha_{7}+\alpha_{1}\alpha_{5}\alpha_{6})\\
 &  & -p_{4}^{2}(\alpha_{6}\alpha_{7}(\alpha_{1}+\alpha_{2}+\alpha_{3}+\alpha_{4}+\alpha_{5})+\alpha_{1}\alpha_{4}\alpha_{7}+\alpha_{3}\alpha_{5}\alpha_{6})\\
 &  & -(p_{1}+p_{2})^{2}(\alpha_{1}\alpha_{3}(\alpha_{4}+\alpha_{5}+\alpha_{6}+\alpha_{7})+\alpha_{1}\alpha_{5}\alpha_{7}+\alpha_{3}\alpha_{4}\alpha_{6})\\
 &  & -(p_{1}+p_{3})^{2}\alpha_{2}\alpha_{5}\alpha_{6}-(p_{2}+p_{3})^{2}\alpha_{2}\alpha_{4}\alpha_{7}.\end{eqnarray*}

It is often sufficient to consider the Feynman integral after setting
some of its legs on-shell, which means that the corresponding external
momenta are fixed by setting their square to a squared particle mass.
In our example we may assume massless particles and set $p_{i}^{2}=0$ for all $i=1,\,...,\,4$. The corresponding
Feynman integral was evaluated in dimensional regularization by classical
polylogarithms in reference \cite{Tau}. We will return to Symanzik polynomials of graphs
with four on-shell legs in section \ref{sec:Classification-by-Critical}. 

Let us now turn to iterated integrals. Let $k$ be the field of either
the real or the complex numbers and $M$ a smooth manifold over $k$.
We consider a piecewise smooth path on $M$, given by a map $\gamma:[0,\,1]\rightarrow M$,
and some smooth differential 1-forms $\omega_{1},\,...,\,\omega_{n}$
on $M.$ The iterated integral of these 1-forms along the path $\gamma$
is defined by \[
\int_{\gamma}\omega_{n}...\omega_{1}=\int_{0\leq t_{1}\leq...\leq t_{n}\leq1}f_{n}(t_{n})dt_{n}...f_{1}(t_{1})dt_{1},\]
where $f_{i}(t)dt=\gamma^{\star}(\omega_{i})$ is the pull-back of
$\omega_{i}$ to $[0,\,1].$ With the term iterated integral we will
more generally refer to $k$-linear combinations of such integrals.

We will consider classes of iterated integrals which define the same
function for any two homotopic paths. Such integrals are called \emph{homotopy
invariant}. They are well-defined functions of variables given by
the end-point of $\gamma$. In such iterated integrals the differential
forms and the order in which we integrate over them have to satisfy
a property known as the \emph{integrability condition}. The condition
is best formulated on tensor products of 1-forms over some field $K\subseteq k$,
which we denote by $\left[\omega_{1}|...|\omega_{m}\right].$ Let
$D$ denote a $K$-linear map from tensor products of smooth 1-forms
on $M$ to tensor products of all forms on $M,$ given by \begin{eqnarray*}
D\left(\left[\omega_{1}|...|\omega_{m}\right]\right) & = & \sum_{i=1}^{m}\left[\omega_{1}|...|\omega_{i-1}|d\omega_{i}|\omega_{i+1}|...|\omega_{m}\right]+\sum_{i=1}^{m-1}\left[\omega_{1}|...|\omega_{i-1}|\omega_{i}\wedge\omega_{i+1}|...|\omega_{m}\right].\end{eqnarray*}
A $K$-linear combination of tensor products $\xi=\sum_{l=0}^{m}c_{i_{1},\,...,\, i_{l}}\left[\omega_{i_{1}}|...|\omega_{i_{l}}\right]$,
$c_{i_{1},\,...,\, i_{l}}\in K$, is called an \emph{integrable word}
if it satisfies the equation \[
D\xi=0.\]
Let $\Omega$ be a finite set of smooth 1-forms and let $B_{m}(\Omega)$
denote the vector space of integrable words of length $m$ with 1-forms
in $\Omega.$ Now we return from words to integrals by considering
the integration map on integrable words:

\begin{equation}
\sum_{l=0}^{m} \sum_{i_{1},\,...,\,i_{l}}c_{i_{1},\,...,\, i_{l}}\left[\omega_{i_{1}}|...|\omega_{i_{l}}\right]\mapsto\sum_{l=0}^{m}\sum_{i_{1},\,...,\,i_{l}}c_{i_{1},\,...,\, i_{l}}\int_{\gamma}\omega_{i_{1}}...\omega_{i_{l}}.\label{eq:Integration map}\end{equation}

A fundamental theorem of Chen \cite{Che} states that this map is an isomorphism
from $B_{m}(\Omega)$ to the set of homotopy invariant iterated integrals
in 1-forms in $\Omega$ of length less or equal to $m,$ if $\Omega$ satisfies
further conditions which we do not specify here. 

In the following we fix $K=\mathbb{Q}$ and discuss two sets of 1-forms
for which the theorem applies. For a coordinate $t_{1}$ on an open
subset of $\mathbb{C}$ we firstly consider the set of closed 1-forms
\[
\Omega_{n}^{\textrm{Hyp}}=\left\{ \frac{dt_{1}}{t_{1}},\,\frac{dt_{1}}{t_{1}-1},\,\frac{t_{2} dt_{1}}{t_{1}t_{2}-1},\,...,\,\frac{\left(\prod_{i=2}^{n}t_{i}\right)dt_{1}}{\prod_{i=1}^{n}t_{i}-1}\right\} .\]

As a trivial consequence of $dt_{1}\wedge dt_{1}=0$, any
tensor product of 1-forms in $\Omega_{n}^{\textrm{Hyp}}$ is an integrable
word. By applying the integration map eq. \ref{eq:Integration map}
to these words, we obtain the class of hyperlogarithms \cite{Lapp}. In particle
physics it is very common to use sub-classes of hyperlogarithms. As
an example, we may consider $\Omega_{2}^{\textrm{Hyp}}$ and fix the
constant $t_{2}=-1.$ To physicists, the iterated integrals obtained from this restriction
are well known as harmonic polylogarithms \cite{RemVer} and suffice for the evaluation
of many Feynman integrals. 

We want to focus on a class of functions of severable variables, obtained
from another set of 1-forms, where now all the $t_{1},\,...,\, t_{n}$
are considered to be coordinates in an open subset of $\mathbb{C}^{n}:$\[
\Omega_{n}^{\textrm{MPL}}=\left\{ \frac{dt_{1}}{t_{1}},\,...,\,\frac{dt_{n}}{t_{n}},\frac{d\left(\prod_{a\leq i\leq b}t_{i}\right)}{\prod_{a\leq i\leq b}t_{i}-1}\textrm{ where }1\leq a\leq b\leq n\right\} .\]

For this set the integrability condition is not trivial and there
are words for which it is not satisfied. The homotopy invariant iterated
integrals which we obtain via the integration map from the the integrable
words in $\Omega_{n}^{\textrm{MPL}}$ form the vector space $\mathcal{B}\left(\Omega_{n}\right)$
of multiple polylogarithms in $n$ variables. We use the notation
$\mathcal{B}_{m}\left(\Omega_{n}\right)$ for the vector space of
such functions obtained from integrable words of length $\leq m.$
There is an explicit map \cite{BogBro} to construct all integrable words in $\Omega_{n}^{\textrm{MPL}}$,
closely related to the 'symbol' in \cite{Gon09, Gon10, Duh11}.

The functions in $\mathcal{B}\left(\Omega_{n}\right)$ were extensively
studied in reference \cite{Bro06}. We just want to recall
a few statements which are relevant for the following considerations.
Firstly, the multiple polylogarithms of Goncharov \cite{Gon01},
frequently used in the physics literature, are contained in this class.
As we want to use the elements of $\mathcal{B}\left(\Omega_{n}\right)$
in an iterative integration procedure, it is important for us to know
their primitives and limits. It is proven in \cite{Bro06} that $\mathcal{B}\left(\Omega_{n}\right)$
is closed under taking primitives. Furthermore if we take the limits
of elements of $\mathcal{B}\left(\Omega_{n}\right)$ at $t_{n}$ equal
to 0 and 1, we obtain $\mathcal{Z}$-linear combinations of elements
in $\mathcal{B}\left(\Omega_{n-1}\right),$ where $\mathcal{Z}$ denotes
the $\mathbb{Q}$-vector space of multiple zeta values.

Now let us consider definite integrals of the form 
\begin{equation} I=\int_{0}^{1}dt_{n}\frac{\beta(\{g_{i}\})}{f}  \label{eq:int form} \end{equation} 
where $f$ is a polynomial and $\beta(\{g_{i}\})\in\mathcal{B}\left(\Omega_{n}\right)$
is a multiple polylogarithm whose arguments are some irreducible polynomials
$g_{i}$. Let us call $f$ and the $g_{i}$ the \emph{critical polynomials}
of the integrand. If $f$ and the $g_{i}$ are linear in $t_{n}$
we can evaluate the above integral and from the mentioned properties
it is clear that the result will be a $\mathcal{Z}$-linear combination
of elements in $\mathcal{B}_{m}\left(\Omega_{n-1}\right).$ If the
result can be again expressed by functions of the form of the above
generic integrand and the critical polynomials are linear in $t_{n-1}$
then we can continue and integrate over this variable from 0 to 1,
and so on. 

Such an iterative procedure can be used to compute Feynman integrals. For recent examples in the  
physics literature, partly relying on different parametrizations, we refer to \cite{DelDuhGloSmi, DelDuhSmi, AblBluHasKleSchWis, ChaDuh, AnaDuhDulMis}. Aiming at such a computation one has to express the Feynman integral by a finite parametric
integral such that the integrand can be written in the above form,
where after each integration step, the critical polynomials are linear
in at least one of the remaining parameters. The method was introduced
systematically in \cite{Bro08} and demonstrated for certain Feynman parametric integrals of the type of eq. \ref{eq:period integral},
coming from primitive logarithmically divergent vacuum Feynman graphs.
However, the approach is not restricted to such graphs. Reference \cite{BroKre}
presents a method to express Feynman integrals with UV sub-divergences
by finite parametric integrals to which the approach may apply. The
treatment of graphs with infrared divergences is not excluded in principle,
but we are missing a canonical method to express IR-divergent integrals by finite
ones. In principle, the method of sector decomposition \cite{BinHei} allows us to write down the coefficients
of a Laurent expansion for a dimensionally regularized, infrared divergent
integral in terms of finite integrals over Feynman parameters, however, 
the polynomials in these integrals usually become very complicated. In view of the above approach one would ideally wish for a method, 
where the critical polynomials in the finite integrals
could be obtained from the Symanzik polynomials in a rather simple way.

For the following discussion let us assume, that in some way we have already been able
to express a given Feynman integral by finite integrals of the type
$I$ and that the critical polynomials are the Symanzik polynomials
of the graph. We focus on the criterion, that after each integration
over a Feynman parameter, the new critical polynomials have to be
linear in a next Feynman parameter. The reduction algorithm to be
reviewed in section \ref{sec:Linear-Reducibility} allows us to study
this criterion as it computes for each integration step a set in which
the critical polynomials are contained. %
\begin{figure}
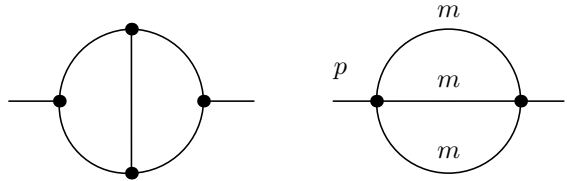

\begin{feynartspicture}(250,120)(2, 1)%

\FADiagram{}
\FAProp(1.5,10.)(5.,10.)(0.,){Straight}{0} \FALabel(3.25,9.18)[t]{} \FAProp(5.,10.)(10.,15.)(-0.4,){Straight}{0} \FALabel(6.0607,13.9392)[br]{} \FAProp(10.,15.)(15.,10.)(-0.4,){Straight}{0} \FALabel(13.9392,13.9392)[bl]{} \FAProp(15.,10.)(18.5,10.)(0.,){Straight}{0} \FALabel(16.75,9.18)[t]{} \FAProp(5.,10.)(10.,5.)(0.4527,){Straight}{0} \FALabel(5.9288,5.9288)[tr]{} \FAProp(10.,5.)(15.,10.)(0.4,){Straight}{0} \FALabel(13.9392,6.0607)[tl]{} \FAProp(10.,15.)(10.,5.)(0.,){Straight}{0} \FALabel(9.18,10.)[r]{} \FAVert(5.,10.){0} \FAVert(10.,15.){0} \FAVert(15.,10.){0} \FAVert(10.,5.){0} 

\FADiagram{}
\FAProp(2.,10.)(5.,10.)(0.,){Straight}{0} \FALabel(2.5,11.52)[b]{$p$} \FAProp(5.,10.)(15.,10.)(-1.,){Straight}{0} \FALabel(10.,15.82)[b]{$m$} \FAProp(15.,10.)(18.,10.)(0.,){Straight}{0} \FALabel(16.5,9.18)[t]{} \FAProp(5.,10.)(15.,10.)(1.,){Straight}{0} \FALabel(10.,6.02)[b]{$m$} \FAProp(5.,10.)(15.,10.)(0.,){Straight}{0} \FALabel(10.,11.02)[b]{$m$} \FAVert(5.,10.){0} \FAVert(15.,10.){0}  
\end{feynartspicture}%

\caption{(a) Massless two-loop graph, (b) Equal-mass sunrise graph\label{fig:two-loop prop graphs}}

\end{figure}

As a further motivation of the following discussion, let us have a glance at two
well-known Feynman graphs in view of the mentioned criterion. For
the massless two-loop graph of figure \ref{fig:two-loop prop graphs}
(a) it was proven by use of the Mellin-Barnes approach and expansions
by nested sums that each coefficient of the $\epsilon$-expansion
is a combination of multiple zeta values \cite{BieWei}. Reference \cite{Bro08}
confirmed this statement for this two-loop graph and several higher-loop graphs by relating them to integrals
of the type of eq. \ref{eq:period integral} whose integrands satisfy
the criterion.

The case of the equal-mass two-loop sunrise graph,
shown in figure \ref{fig:two-loop prop graphs} (b), is very different.
The desired coefficients in the $\epsilon$-expansion can be derived
from the $D=2$-dimensional version of the Feynman integral, \[
I_{\textrm{sunrise}}=\int_{\alpha_{j}\geq0}d\alpha_{1}d\alpha_{2}d\alpha_{3}\delta\left(1-\sum_{i=1}^{3}\alpha_{i}\right)\frac{1}{\mathcal{F}_{G}},\]
with the second Symanzik polynomial\[
\mathcal{F}_{G}=-p^{2}\alpha_{1}\alpha_{2}\alpha_{3}+m^{2}(\alpha_{1}\alpha_{2}+\alpha_{2}\alpha_{3}+\alpha_{1}\alpha_{3})(\alpha_{1}+\alpha_{2}+\alpha_{3}),\]
playing the role of the critical polynomial. As $\mathcal{F}_{G}$
is not linear in any of the Feynman parameters, integral $I_{\textrm{sunrise}}$
fails the criterion. It is in perfect match with this simple observation, that the known
result of the sunrise integral involves elliptic integrals \cite{LapRem}. In general, it is possible that the polynomials of an integrand fail the criterion and 
still the integral can be expressed by multiple polylogarithms. However,
the criterion may provide a useful first classification and give a
hint where to look for integrals, which exceed the class of multiple
polylogarithms. We also want to mention reference \cite{CarLar}, where different criteria are used to search for such integrals. 

%=========================================================================================================

\section{Linear Reducibility\label{sec:Linear-Reducibility}}

Let us briefly review the polynomial reduction algorithm of \cite{Bro08}.
Let $S=\{f_{1},\,...,\, f_{N}\}$ be a set of polynomials in the variables
$\alpha_{1},\,...,\,\alpha_{n}$ with rational coefficients. 

(1) If there is an index $1\leq r_{1}\leq n$ such that all polynomials
in $S$ are linear in $\alpha_{r_{1}}$ we can write\[
f_{i}=g_{i}\alpha_{r_{1}}+h_{i}\textrm{ for all }1\leq i\leq N,\]
where $g_{i}=\frac{\partial f_{i}}{\partial\alpha_{r_{1}}}$ and $h_{i}=f_{i}|_{\alpha_{r_{1}}=0}.$
We define\begin{equation}
S'_{(r_{1})}=\left\{ \left(g_{i}\right)_{1\leq i\leq N},\,\left(h_{i}\right)_{1\leq i\leq N},\,\left(h_{i}g_{j}-g_{i}h_{j}\right)_{1\leq i<j\leq N}\right\} \label{eq:S tilde}\end{equation}
and furthermore we define $S_{(r_{1})}$ to be the set of irreducible
polynomials in $S'_{(r_{1})}.$ In $S_{(r_{1})}$ we neglect all constants
and monomials.

(2) If there is a $1\leq r_{2}\leq n$ such that all polynomials in
$S_{(r_{1})}$ are linear in $\alpha_{r_{2}},$ we repeat the above
step, now with $S_{(r_{1})}$ and $\alpha_{r_{2}}$ in the roles of
$S$ and $\alpha_{r_{1}},$ and obtain a new set of polynomials which
we call $S_{[r_{1}](r_{2})}.$ Then, assuming that starting from $S$
the above steps can be done first for $\alpha_{r_{2}}$ and then for
$\alpha_{r_{1}},$ we compute $S_{[r_{2}](r_{1})}$ and take the intersection
of both sets:\[
S_{[r_{1},\, r_{2}]}=S_{[r_{1}](r_{2})}\cap S_{[r_{2}](r_{1})}.\]
Whenever we speak of intersections here and in the following, we mean the common zero loci,
such that if a polynomial appears in two sets with a different constant prefactor, 
it nevertheless belongs to the intersection. Then we choose a next variable in which all polynomials of $S_{[r_{1},\, r_{2}]}$
are linear and continue in the same way. At each iteration we apply
step (1) and take the intersection\[
S_{[r_{1},\, r_{2},\,...,\, r_{k}]}=\cap_{1\leq i\leq k}S_{[r_{1},\,...,\,\hat{r_{i}},\,...,\, r_{k}](r_{i})}.\]
If a set $S_{[r_{1},\,...,\,\hat{r_{i}},\,...,\, r_{k}]}$ contains
a polynomial which is non-linear in $\alpha_{r_{i}}$, the set
 $S_{[r_{1},\,...,\,\hat{r_{i}},\,...,\, r_{k}](r_{i})}$ is
undefined and omitted in the intersection. If this happens for all
$1\leq i\leq k$ the set $S_{[r_{1},\, r_{2},\,...,\, r_{k}]}$ is
undefined and the algorithm stops. Unless this situation occurs, we
obtain for each sequence of variables $(\alpha_{r_{1}},\,...,\,\alpha_{r_{k}})$,
$k\leq n,$ a sequence of sets $S_{(r_{1})},\, S_{[r_{1},\, r_{2}]},\,...,\, S_{[r_{1},\, r_{2},\,...,\, r_{k}]}.$ 
\begin{definition}
We say that $S$ is \emph{Fubini reducible} (or \emph{linearly reducible})
if there is an ordering of all $n$ variables $(\alpha_{r_{1}},\,...,\,\alpha_{r_{n}})$
such that every polynomial in $S_{[r_{1},\, r_{2},\,...,\, r_{k}]}$
is linear in $\alpha_{r_{k+1}}$ for all $1\leq k<n.$
\end{definition}

The linear reducibility of the set of critical polynomials of an integrand
as in eq.\ \ref{eq:int form} is a criterion for the integral to be
computable by the above approach. The criterion is sufficient but not 
neccessary. The sets $S_{[r_{1},\,...,\, r_{k}]}$ contain the critical polynomials
of the integrand after the first $k$ integrations, but might as well contain spurious 
polynomials which drop out in the integration procedure. A more refined reduction algorithm
presented in reference \cite{Bro09} omits such cases, but not necessarily all of them. Furthermore the occurrence of a quadratic polynomial
does not always forbid us to continue with the computation.

By applying the above algorithm to first Symanzik polynomials, it was shown
in \cite{Bro08} that several vacuum Feynman integrals can be computed
and evaluate to combinations of multiple zeta values. Moreover the
same is true for coefficients of a dimensional series expansion of
Feynman integrals with two legs, obtained from cutting one edge in
one of these vacuum graphs, as in the case of figure \ref{fig:two-loop prop graphs} (a). 
The linear reducibility of first Symanzik polynomials is extensively studied
in terms of Dodgson polynomials in \cite{Bro09}. It is shown that the
first five iterations of the reduction succeed for any first Symanzik
polynomial. Moreover, a first Symanzik polynomial is reducible, if
its graph has vertex width less or equal three. (This is the class
of graphs which decompose into two connected components after removing
three vertices or less.) These results explain, why one has to go
up to complicated graphs at high loop orders to find first examples
for vacuum-type Feynman integrals which exceed the set of multiple
zeta values \cite{BroSch, BroDor}.

In the following, we want to consider the above algorithm applied
to both Symanzik polynomials. In order to include $\mathcal{F}$ which
may depend on particle masses and external momenta, we slightly extend
the above formulation of step (1), allowing $f_{i}$ to be polynomials
whose coefficients are rational numbers or algebraic functions of
additional parameters $s_{1},\,...,\, s_{m}.$ The rest of the algorithm
is not affected by this change.

It will be useful to consider
polynomial reduction in  coordinates for products of $\mathbb{P}^1$. Let $P(\alpha_{1},\,...,\,\alpha_{n})$
be a polynomial in $n$ Feynman parameters and consider an associated
function of new variables $x_{1},\,...,\, x_{n},\, y_{1},\,...,\, y_{n}$
defined by \[
\bar{P}(x_{1},\,...,\, x_{n},\, y_{1},\,...,\, y_{n})=\left(\prod_{i=1}^{n}y_{i}^{d_{i}(P)}\right)P\left(\frac{x_{1}}{y_{1}},\,...,\,\frac{x_{n}}{y_{n}}\right),\]
where $d_{i}(P)$ denotes the maximal degree of $P$ in $\alpha_{i}$
for all $1\leq i\leq n.$ We obtain our original polynomial back by
setting \begin{equation}
\bar{P}(\alpha_{1},\,...,\,\alpha_{n},\,1,\,...,\,1)=P(\alpha_{1},\,...,\,\alpha_{n}).\label{eq:back to alpha}\end{equation}
The function $\bar{P}$ is by definition a polynomial in $x_{1},\,...,\, x_{n},\, y_{1},\,...,\, y_{n}$
and it is linear in $x_{k}$ if and only if $P$ is linear in $\alpha_{k}.$

Let us consider a set of polynomials $S=\{P_{1},\,...,\, P_{N}\}$
which are all linear in $\alpha_{k}.$ We can write \[
P_{i}=\alpha_{k}\frac{\partial}{\partial\alpha_{k}}P_{i}+P_{i}|_{\alpha_{k}=0}.\]
Consider the corresponding set of polynomials $\bar{S}=\{\bar{P}_{1},\,...,\,\bar{P}_{N}\},$
obtained from $S$ by changing to the projective coordinates. Each
of these polynomials is linear in $x_{k}$ and we can write\[
\bar{P}_{i}=x_{k}\bar{P}_{i}|_{x_{k}=1,\, y_{k}=0}+y_{k}\bar{P}_{i}|_{x_{k}=0,\, y_{k}=1}.\]
Let us use the convention that in a Fubini reduction with respect
to the $x$-variables, we neglect the above prefactor $y_{k}$ in
the sense that in eq. \ref{eq:S tilde} the terms are given by $g_{i}=\bar{P}_{i}|_{x_{k}=1,\, y_{k}=0}$
and $h_{i}=\bar{P}_{i}|_{x_{k}=0,\, y_{k}=1},$ which does not affect
the linear reducibility. Step (1) of the reduction algorithm applied
to $\bar{S}$ with respect to $x_{k}$ then gives the set $\bar{S}_{(k)}$
consisting of the irreducible factors of 

\begin{eqnarray}
\bar{S}'_{(k)} & = & \left\{ \left(\bar{P}_{i}|_{x_{k}=1,\, y_{k}=0}\right)_{1\leq i\leq N},\,\left(\bar{P}_{i}|_{x_{k}=0,\, y_{k}=1}\right)_{1\leq i\leq N},\right.\nonumber \\
 &  & \left.\left(\bar{P}_{i}|_{x_{k}=0,\, y_{k}=1}\cdot\bar{P}_{j}|_{x_{k}=1,\, y_{k}=0}-\bar{P}_{i}|_{x_{1}=1,\, y_{k}=0}\cdot\bar{P}_{j}|_{x_{k}=0,\, y_{k}=1}\right)_{1\leq i<j\leq N}\right\} .\label{eq:step one auf bar S}\end{eqnarray}

Considering a Fubini reduction in $x$-variables instead of $\alpha$-variables,
we convince ourselves that the factorizations into irreducible polynomials
are not affected by the change of variables. Indeed if $P$ factorizes
as \[
P=f_{1}\cdot f_{2}\]
into two polynomials $f_{1},$ $f_{2}$ then $\bar{P}$ factorizes
as \begin{eqnarray*}
\bar{f_{1}}\cdot\bar{f_{2}} & = & \left(\prod_{i=1}^{n}y_{i}^{d_{i}(f_{1})}\right)f_{1}\left(\frac{x_{1}}{y_{1}},\,...,\,\frac{x_{n}}{y_{n}}\right)\cdot\left(\prod_{i=1}^{n}y_{i}^{d_{i}(f_{2})}\right)f_{2}\left(\frac{x_{1}}{y_{1}},\,...,\,\frac{x_{n}}{y_{n}}\right)=\bar{P},\end{eqnarray*}
because of $d_{i}(f_{1})+d_{i}(f_{2})=d_{i}(P)$ for all $1\leq i\leq n.$
As a consequence we obtain:
\begin{lemma}
$P$ is linearly reducible with respect to the variables $\alpha_{1},\,...,\,\alpha_{n}$
if and only if $\bar{P}$ is linearly reducible with respect to the
variables $x_{1},\,...,\, x_{n}.$
\end{lemma}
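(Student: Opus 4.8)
The plan is to show that the whole reduction procedure commutes with the passage $P\mapsto\bar P$, so that the $\alpha$-reduction of $S=\{P\}$ and the $x$-reduction of $\bar S=\{\bar P\}$ run in lockstep under one and the same ordering of variables. Concretely, I would prove by induction on the number of reduction steps that for any ordering $(\alpha_{r_1},\dots,\alpha_{r_k})$ one has
\[
\bar S_{[r_1,\dots,r_k]}\;=\;\overline{S_{[r_1,\dots,r_k]}},
\]
where both sides are read as sets of irreducible polynomials with constant and monomial factors discarded, and where $\overline{(\cdot)}$ is applied factor by factor. Since the paper already records that $\bar P$ is linear in $x_k$ exactly when $P$ is linear in $\alpha_k$, and that multiplication by a monomial in the $y$-variables does not affect linearity in any $x_l$, this identity shows that the admissible orderings for $S$ and for $\bar S$ coincide, which is precisely the claimed equivalence (taking $S=\{P\}$).

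For the inductive step it suffices to analyse step (1) applied to a set $S=\{P_1,\dots,P_N\}$ that is linear in $\alpha_k$, followed by the intersection. Writing $P_i=g_i\alpha_k+h_i$ with $g_i=\partial P_i/\partial\alpha_k$ and $h_i=P_i|_{\alpha_k=0}$, a direct computation from the definition of the bar shows that the entries of eq.\ \ref{eq:step one auf bar S} satisfy $\bar P_i|_{x_k=1,\,y_k=0}=\bigl(\prod_{l\neq k}y_l^{\,d_l(P_i)-d_l(g_i)}\bigr)\bar{g_i}$ and the analogous identity for $h_i$, because $g_i$ and $h_i$ do not involve $\alpha_k$. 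Thus the $g$- and $h$-entries of $\bar S'_{(k)}$ have, after neglecting monomials, the same irreducible factors as $\bar{g_i}$ and $\bar{h_i}$; and by the factorisation property $\bar P=\bar{f_1}\,\bar{f_2}$ these are exactly the bars of the irreducible factors of $g_i$ and $h_i$.

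The one point that really requires care, and which I expect to be the main obstacle, is the resultant entry $R_{ij}:=h_ig_j-g_ih_j$ and its projective counterpart
\[
\bar R_{ij}:=\bar P_i|_{x_k=0,\,y_k=1}\,\bar P_j|_{x_k=1,\,y_k=0}-\bar P_i|_{x_k=1,\,y_k=0}\,\bar P_j|_{x_k=0,\,y_k=1}.
\]
Here one cannot invoke $\bar P=\bar{f_1}\,\bar{f_2}$, since the bar of a difference is not the difference of the bars. I would instead show that nonetheless $\bar R_{ij}$ equals $\overline{R_{ij}}$ up to a monomial in the $y_l$. The idea is that each factor $\bar P_i|_{x_k=1,\,y_k=0}$ is homogeneous of degree $d_m(P_i)$ in the pair $(x_m,y_m)$ for every $m\neq k$, so both products above, and hence their difference $\bar R_{ij}$, are homogeneous of the common degree $d_m(P_i)+d_m(P_j)$ in $(x_m,y_m)$; that is, $\bar R_{ij}$ is multi-homogeneous. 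Setting all $y_l=1$ recovers $R_{ij}$, so $\bar R_{ij}$ is a multi-homogeneous lift of $R_{ij}$, and the elementary inequality $\max(a,b)+\max(c,d)\geq\max(b+c,\,a+d)$ applied to the degrees gives $d_m(P_i)+d_m(P_j)\geq d_m(R_{ij})$. As the multi-homogeneous lift of a fixed multidegree is unique, $\bar R_{ij}$ must equal $\overline{R_{ij}}$ times a monomial in the $y_l$, which is discarded with the others; so the resultant entries of $\bar S'_{(k)}$ contribute exactly the bars of the irreducible factors of $R_{ij}$.

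Finally I would check that the bookkeeping of irreducibility and of intersections is itself compatible with the bar. The multidegree $x_i^{a_i}y_i^{b_i}\mapsto(a_i+b_i)_i$ defines a positive $\mathbb Z^n$-grading of $k[x,y]$ under which every $\bar q$ is homogeneous; for an irreducible $\alpha$-polynomial $q$ that is not a monomial, $\bar q$ has no monomial factor, and a homogeneous factorisation of $\bar q$ forces one factor to be constant, so $\bar q$ is again irreducible. Hence $q\mapsto\bar q$ is a bijection (up to scalars) between non-monomial $\alpha$-irreducibles and the multi-homogeneous $x,y$-irreducibles without monomial factors, and it therefore commutes with forming common zero loci, so $\overline{A\cap B}=\overline A\cap\overline B$. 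Combining this with the step-(1) analysis closes the induction, and the equivalence of admissible orderings then yields the lemma.
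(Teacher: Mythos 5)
Your proposal is correct, and it follows the same underlying strategy as the paper: show that the Fubini reduction commutes with the passage $P\mapsto\bar P$, so that the $\alpha$-reduction of $S$ and the $x$-reduction of $\bar S$ run in parallel and admit exactly the same orderings. The difference is one of completeness rather than of route. The paper's own proof consists of the single observation that the bar map is multiplicative, $P=f_1\cdot f_2$ implies $\bar P=\bar f_1\cdot\bar f_2$ (using $d_i(f_1)+d_i(f_2)=d_i(P)$), together with the convention on discarding the $y_k$ prefactor, and then asserts the lemma ``as a consequence''. You supply precisely the points this leaves implicit: (i) that the $g$- and $h$-entries of $\bar S'_{(k)}$ agree with $\bar g_i$, $\bar h_i$ only up to monomials in the $y_l$, which are discarded anyway; (ii) that the resultant entries, to which multiplicativity does not apply, still equal $\overline{h_ig_j-g_ih_j}$ times a $y$-monomial --- your multi-homogeneous-lift argument here is correct and is the real content of the inductive step; and (iii) that irreducibility itself is preserved by the bar (your grading argument), which is what gives a genuine bijection of irreducible factors, hence compatibility with the intersections, and which does not follow from multiplicativity alone. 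One caveat, shared with the paper: identities like $\bar P_i|_{x_k=1,\,y_k=0}=\bigl(\prod_{l\neq k}y_l^{\,d_l(P_i)-d_l(g_i)}\bigr)\bar g_i$ presuppose $d_k(P_i)=1$; for members of $S$ that do not involve $\alpha_k$ at all, one must read the projective splitting as the coefficient decomposition $g_i=0$, $h_i=P_i$ rather than as literal restrictions, after which both reductions again agree.
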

Now let us choose some coordinate $x_{l}$, $1\leq l\leq n,$ and
define two new sets of polynomials by restrictions $x_{l}=0,\, y_{l}=1$
and $x_{l}=1,\, y_{l}=0$ respectively: \begin{eqnarray*}
\bar{S}^{(l,\,0,\,1)} & = & \bar{S}|_{x_{l}=0,\, y_{l}=1},\\
\bar{S}^{(l,\,1,\,0)} & = & \bar{S}|_{x_{l}=1,\, y_{l}=0}.\end{eqnarray*}

To these sets we apply one reduction step with respect to $x_{k}$. 
At first let us assume $l\neq k.$ Step (1) of the algorithm
gives \begin{eqnarray*}
\bar{S}_{(k)}^{(l,\,0,\,1)} & = & \textrm{irreducible factors of }\bar{S}'_{(k)}|_{x_{l}=0,\, y_{l}=1},\\
\bar{S}_{(k)}^{(l,\,1,\,0)} & = & \textrm{irreducible factors of }\bar{S}'_{(k)}|_{x_{l}=1,\, y_{l}=0}.\end{eqnarray*}
If a polynomial $\bar{P}$ factorizes as $\bar{P}=f_{1}\cdot f_{2}$
then furthermore \begin{eqnarray*}
\bar{P}|_{x_{l}=0,\, y_{l}=1} & = & f_{1}|_{x_{l}=0,\, y_{l}=1}\cdot f_{2}|_{x_{l}=0,\, y_{l}=1},\\
\bar{P}|_{x_{l}=1,\, y_{l}=0} & = & f_{1}|_{x_{l}=1,\, y_{l}=0}\cdot f_{2}|_{x_{l}=1,\, y_{l}=0},\end{eqnarray*}
and the maximal degrees with respect to another variable $x_{i}$
satisfy $d_{i}(f_{j})\geq d_{i}\left(f_{j}|_{x_{l}=0,\, y_{l}=1}\right)$
and $d_{i}(f_{j})\geq d_{i}\left(f_{j}|_{x_{l}=1,\, y_{l}=0}\right)$
for $j=1,\,2$ and for all $i=1,\,...,\, n$. This means that for
each polynomial $f\in\bar{S}_{(k)}$ there is at most one polynomial
$f'\in\bar{S}_{(k)}^{(l,\,0,\,1)}$, and $d_{i}(f)\geq d_{i}(f')$
for all $i=1,\,...,\, n$. The same is true for $\bar{S}_{(k)}^{(l,\,1,\,0)}.$
%Therefore, if at some stage of the iteration there is an intersection
%of sets $\cap_{1\leq j\leq m}\bar{S}_{[r_{1},\,...,\,\hat{r_{j}},\,...,\, r_{k}](r_{j})}$
%whose polynomials are linear in a variable $x_{i},$ then all polynomials
%in the corresponding intersections $\cap_{1\leq j\leq m}\bar{S}_{[r_{1},\,...,\,\hat{r_{j}},\,...,\, r_{k}](r_{j})}^{(l,\,0,\,1)}$
%and in $\cap_{1\leq j\leq m}\bar{S}_{[r_{1},\,...,\,\hat{r_{j}},\,...,\, r_{k}](r_{j})}^{(l,\,1,\,0)}$
%are linear in $x_{i}$ as well. 

Lastly we observe that in the case of $l=k,$ the irreducible factors
of $\bar{S}^{(l,\,0,\,1)}$ and $\bar{S}^{(l,\,1,\,0)}$ are already
contained in $\bar{S}_{(k)}$. This proves the following lemma:
\begin{lemma}
\emph{If $\bar{S}$ is linearly reducible with the ordering $\left(x_{r_{1}},\,...,\, x_{r_{n}}\right)$
then for any $1\leq l\leq n$ the sets }\textup{\emph{$\bar{S}|_{x_{l}=0,\, y_{l}=1}$
}}\textup{and }$\bar{S}|_{x_{l}=1,\, y_{l}=0}$ \emph{are linearly
reducible with $\left(x_{r_{1}},\,...,\,\hat{x}_{l},\,...,\, x_{r_{n}}\right).$ }
\end{lemma}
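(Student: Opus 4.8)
The plan is to run the reduction algorithm on $\bar{S}$ and on the restricted set $\bar{S}|_{x_{l}=0,\,y_{l}=1}$ in parallel and to show, stage by stage, that every polynomial produced in the restricted reduction is dominated in multidegree by a polynomial produced in the full reduction. Concretely, I would prove by induction on the number of completed reduction steps that each factor $f'$ occurring at a given stage of the restricted process satisfies $d_{i}(f')\le d_{i}(f)$ for all $i$ for a suitable polynomial $f$ of the full process. Once this domination is in place, linear reducibility transfers immediately: by hypothesis the full reduction is linear in the next variable of the ordering at every stage, i.e.\ $d_{r_{k+1}}(f)\le 1$, and domination then forces $d_{r_{k+1}}(f')\le 1$, so the restricted reduction is linear in the next variable of $(x_{r_1},\ldots,\hat{x}_l,\ldots,x_{r_n})$ at every stage. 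The argument for $\bar{S}|_{x_{l}=1,\,y_{l}=0}$ is identical.

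The inductive step rests on the two observations established just above the lemma. For a step with respect to a variable $x_{k}$ with $k\ne l$, the first observation says that step~(1) commutes with the restriction up to factorisation: the restricted set is the set of irreducible factors of $\bar{S}'_{(k)}|_{x_{l}=0,\,y_{l}=1}$, and each such factor $f'$ divides the restriction of some irreducible factor $f\in\bar{S}_{(k)}$, whence $d_{i}(f')\le d_{i}(f)$. This is exactly what is needed to propagate the domination across one ordinary reduction step. To cross the single step in which the full process eliminates $x_l$ itself, which the restricted process skips, I would use the second observation, that the irreducible factors of $\bar{S}|_{x_{l}=0,\,y_{l}=1}$ already lie in $\bar{S}_{(l)}$. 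Aligning restricted stage $j$ with full stage $j$ before the position of $x_l$ in the ordering and with full stage $j+1$ after it, the domination is maintained through the crossing; moreover, once the full process has eliminated $x_l$, no later full polynomial contains $x_l$, the restriction becomes inert, and the two processes run in lock-step through the remaining variables.

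The main obstacle is the intersection operation defining each stage, $\bar{S}_{[r_1,\ldots,r_k]}=\bigcap_i \bar{S}_{[r_1,\ldots,\hat{r}_i,\ldots,r_k](r_i)}$. Domination by mere existence is not preserved by intersections: a restricted polynomial lying in every branch might be dominated by a different full polynomial in each branch, none of which survives the full intersection. To repair this I would upgrade the domination to a canonical, branch-independent correspondence. The uniqueness clause of the first observation (at most one restricted factor per full factor) already makes restriction a well-defined, degree-nonincreasing partial map on factors; the goal is to promote it to an injection from the surviving full polynomials onto the restricted ones that commutes with step~(1). Since an injection commutes with intersection, the restricted stage is then the image of the full stage, and linearity of the full stage in the next variable passes to its degree-nonincreasing image. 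Verifying that this correspondence is well defined and stable under both step~(1) and the recursive intersections is the technical heart of the proof, and it is where the per-step uniqueness of factors must be used with care.
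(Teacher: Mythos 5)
Your first two paragraphs are, up to phrasing, exactly the paper's proof. The paper establishes precisely the two facts you invoke: for $k\neq l$, step (1) applied to the restricted set yields the irreducible factors of $\bar{S}'_{(k)}|_{x_{l}=0,\,y_{l}=1}$ (respectively $|_{x_{l}=1,\,y_{l}=0}$), and since restriction commutes with factorization and can only lower the degrees $d_{i}$, every restricted factor is degree-dominated by a factor in $\bar{S}_{(k)}$; and for $k=l$ the irreducible factors of $\bar{S}^{(l,\,0,\,1)}$ and $\bar{S}^{(l,\,1,\,0)}$ already lie in $\bar{S}_{(l)}$. It then says ``this proves the following lemma,'' leaving the induction along the ordering, including the index shift once the full reduction has eliminated $x_{l}$, entirely implicit. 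So the portion of your argument that you actually complete coincides with the paper's argument.

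Your third paragraph, however, raises a point on which the paper is silent, and you should be aware of both halves of that fact. The concern itself is legitimate: writing $\rho$ for ``restrict and take irreducible non-monomial factors,'' the one-step commutation shows that each restricted \emph{branch} is $\rho$ of the corresponding full branch, so the restricted stage set is $\bigcap_{i}\rho(B_{i})$, whereas the linear-reducibility hypothesis only controls $\rho\bigl(\bigcap_{i}B_{i}\bigr)$; the inclusion $\rho\bigl(\bigcap_{i}B_{i}\bigr)\subseteq\bigcap_{i}\rho(B_{i})$ can a priori be strict, so domination ``by mere existence'' does not obviously survive the intersections. The paper never addresses this; its proof is purely branch-wise, so you will not find your ``technical heart'' resolved there. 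But your proposed repair cannot work as stated: the restrict-and-factor correspondence is genuinely not an injection, nor even single-valued (an irreducible polynomial can restrict to a product of several non-monomial irreducibles, and, in the problematic direction, two distinct polynomials $u=f'+x_{l}A$ in one branch and $v=f'+x_{l}B$ in another can share the restricted factor $f'$ while no element of $B_{1}\cap B_{2}$ restricts to it). That configuration is exactly what makes $\bigcap_{i}\rho(B_{i})$ potentially larger than $\rho\bigl(\bigcap_{i}B_{i}\bigr)$, and no per-step uniqueness clause excludes it — indeed the paper's own ``at most one polynomial $f'$'' remark is loose for the same reason. Closing this step would require either showing that such cross-branch coincidences cannot occur, or reformulating reducibility branch-wise without intersections (a strictly stronger hypothesis than the paper's definition); neither your sketch nor the paper supplies such an argument.
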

In combination with the previous lemma we obtain:
\begin{lemma}
Let $S=\{P_{1},\,...,\, P_{N}\}$ be a set of polynomials which is
linearly reducible with $(\alpha_{r_{1}},\,...,\,\alpha_{r_{n}})$
and whose members are linear in $\alpha_{l}$. Then the sets $S^{l}=\left\{ \frac{\partial P_{1}}{\partial\alpha_{l}},\,...,\,\frac{\partial P_{N}}{\partial\alpha_{l}}\right\} $
and $S_{l}=\left\{ P_{1}|_{\alpha_{l}=0},\,...,\, P_{N}|_{\alpha_{l}=0}\right\} $
are linearly reducible with $(\alpha_{r_{1}},\,...,\,\hat{\alpha}_{l},\,...,\,\alpha_{r_{n}})$.\label{lemma:Lin red Ableitung und Restriktion}
\end{lemma}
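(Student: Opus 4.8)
The plan is to deduce this lemma from the two preceding lemmas by passing to the projective coordinates $\bar{P}$ and back, without any new computation in the $\alpha$-variables. The essential point is to recognize the sets $S^{l}$ and $S_{l}$, up to harmless monomial factors, as exactly the restrictions $\bar{S}|_{x_{l}=1,\, y_{l}=0}$ and $\bar{S}|_{x_{l}=0,\, y_{l}=1}$ that already appear in the immediately preceding lemma.

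First I would fix the identification of the relevant sets. Since every $P_{i}\in S$ is linear in $\alpha_{l}$, the associated $\bar{P}_{i}$ is linear in $x_{l}$ and decomposes as $\bar{P}_{i}=x_{l}\,\bar{P}_{i}|_{x_{l}=1,\, y_{l}=0}+y_{l}\,\bar{P}_{i}|_{x_{l}=0,\, y_{l}=1}$, exactly as displayed before the reduction step. Writing $P_{i}=\alpha_{l}g_{i}+h_{i}$ with $g_{i}=\partial P_{i}/\partial\alpha_{l}$ and $h_{i}=P_{i}|_{\alpha_{l}=0}$, a direct computation of $\bar{P}_{i}$ shows that $\bar{P}_{i}|_{x_{l}=1,\, y_{l}=0}$ equals $\overline{g_{i}}$ times the monomial $\prod_{j\neq l}y_{j}^{d_{j}(P_{i})-d_{j}(g_{i})}$, and likewise $\bar{P}_{i}|_{x_{l}=0,\, y_{l}=1}$ equals $\overline{h_{i}}$ times a monomial. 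The exponents are nonnegative because $d_{j}(P_{i})\geq d_{j}(g_{i})$ and $d_{j}(P_{i})\geq d_{j}(h_{i})$ for all $j\neq l$, as $g_{i}$ and $h_{i}$ are, respectively, the coefficient of $\alpha_{l}$ in $P_{i}$ and the $\alpha_{l}=0$ part of it. Since step (1) of the reduction algorithm discards constants and monomials and keeps only irreducible factors, the sets $\bar{S}|_{x_{l}=1,\, y_{l}=0}$ and $\overline{S^{l}}$ produce identical non-monomial irreducible factors, as do $\bar{S}|_{x_{l}=0,\, y_{l}=1}$ and $\overline{S_{l}}$; in particular the linear reducibility of either member of each pair is equivalent to that of the other.

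With this identification in hand the proof is a short chain. By the earlier lemma on the equivalence of linear reducibility for $S$ and $\bar{S}$, the hypothesis that $S$ is linearly reducible with $(\alpha_{r_{1}},\,\ldots,\,\alpha_{r_{n}})$ is equivalent to $\bar{S}$ being linearly reducible with $(x_{r_{1}},\,\ldots,\, x_{r_{n}})$. By the immediately preceding lemma, this forces both $\bar{S}|_{x_{l}=0,\, y_{l}=1}$ and $\bar{S}|_{x_{l}=1,\, y_{l}=0}$ to be linearly reducible with the shortened ordering $(x_{r_{1}},\,\ldots,\,\hat{x}_{l},\,\ldots,\, x_{r_{n}})$. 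By the identification above these are, up to monomials, $\overline{S_{l}}$ and $\overline{S^{l}}$, so both $\overline{S^{l}}$ and $\overline{S_{l}}$ are linearly reducible with that ordering. Applying the equivalence lemma in the reverse direction to each set then returns that $S^{l}$ and $S_{l}$ are linearly reducible with $(\alpha_{r_{1}},\,\ldots,\,\hat{\alpha}_{l},\,\ldots,\,\alpha_{r_{n}})$, as claimed.

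I expect the only real subtlety — and hence the step to write out with care — to be the degree bookkeeping in the identification: verifying that the difference between $\bar{P}_{i}|_{x_{l}=1,\, y_{l}=0}$ and $\overline{g_{i}}$ (respectively between $\bar{P}_{i}|_{x_{l}=0,\, y_{l}=1}$ and $\overline{h_{i}}$) is genuinely only a monomial factor with nonnegative exponents, and confirming that such factors are irrelevant because the algorithm neglects monomials and because homogenizing a product of polynomials multiplies their homogenizations, the same fact already used to prove the equivalence lemma. Everything else is a formal combination of the two earlier lemmas.
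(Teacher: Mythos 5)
Your proposal is correct and follows essentially the same route as the paper: the paper derives this lemma precisely by combining the equivalence lemma ($S$ reducible $\Leftrightarrow$ $\bar{S}$ reducible) with the restriction lemma for $\bar{S}|_{x_{l}=0,\,y_{l}=1}$ and $\bar{S}|_{x_{l}=1,\,y_{l}=0}$, identifying these restrictions with $\overline{S_{l}}$ and $\overline{S^{l}}$ up to monomial factors that the algorithm discards. Your explicit degree bookkeeping for those monomial factors is a detail the paper leaves implicit, but it is the same argument.
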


%=========================================================================================================

\section{Towards a Classification by Critical Minors\label{sec:Classification-by-Critical}}

Let $G$ be a graph and $E_{G}$ its set of edges. For $e\in G$
we denote by $G\backslash e$ the graph obtained from $G$ by \emph{deletion}
of $e.$ Furthermore we write $G//e$ for the graph obtained from
$G$ by \emph{contraction} of $e$. This is the graph where the end-points
of $e$ are identified and then $e$ is removed. For any distinct
edges $e_{1},\, e_{2}\in E_{G}$ the operations of deleting (or contracting)
$e_{1}$ and deleting (or contracting) $e_{2}$ commute. Therefore
we can more generally write $\gamma=G\backslash D//C$ with distinct
$D,\, C\subset E_{G}$, for the unique graph obtained from $G$ by
deleting all edges in $D$ and contracting all edges in $C.$ Any
such $\gamma$ is called a \emph{minor} of $G.$

If $G$ is connected and there is an edge $e$ such that $G\backslash e$
is disconnected then $e$ is called a \emph{bridge}. When speaking
of Feynman graphs, we may ignore disconnected graphs, so we introduce
the convention that the corresponding Symanzik polynomial are zero:\[
\mathcal{U}_{G\backslash e}=0,\,\mathcal{F}_{0,\, G\backslash e}=0\textrm{ if }e\textrm{ is a bridge.}\]
Let us furthermore call $e_{t}\in E_{G}$ a \emph{tadpole} if it is
attached to the same vertex at both ends. In this case we have a factorization
in the corresponding Feynman parameter:\begin{eqnarray*}
\mathcal{U}_{G} & = & \mathcal{U}_{G\backslash e_{t}}\alpha_{t},\\
\mathcal{F}_{0,\, G} & = & \mathcal{F}_{0,\, G\backslash e_{t}}\alpha_{t}.\end{eqnarray*}
For any $e$ which is not a tadpole we have the well-known deletion/contraction
identities: \begin{eqnarray*}
\mathcal{U}_{G} & = & \mathcal{U}_{G\backslash e}\alpha_{e}+\mathcal{U}_{G//e},\\
\mathcal{F}_{0,\, G} & = & \mathcal{F}_{0,\, G\backslash e}\alpha_{e}+\mathcal{F}_{0,\, G//e}.\end{eqnarray*}
Now we consider a set of arbitrary particle masses $m_{1},\,...,\, m_{n}$
distributed over the edges of $G,$ with the restriction that at least
one edge $e_{k}$ is massless, $m_{k}=0.$ Then the second Symanzik
polynomial\begin{eqnarray*}
\mathcal{F}_{G} & = & \mathcal{F}_{0,\, G}+\mathcal{U}_{G}\cdot\sum_{i\neq k}\alpha_{i}m_{i}^{2}\end{eqnarray*}
is linear in $\alpha_{k}$ and the above relations are extended by\begin{eqnarray*}
\mathcal{F}_{G} & = & \mathcal{F}_{G\backslash e_{k}}\alpha_{k}\textrm{ if }e_{k}\textrm{ is a tadpole,}\\
\mathcal{F}_{G} & = & \mathcal{F}_{G\backslash e_{k}}\alpha_{k}+\mathcal{F}_{G//e_{k}}\textrm{ if }e_{k}\textrm{ is not a tadpole.}\end{eqnarray*}

Equivalently, if $e_{k}$ is a massless edge of $G$ and not a tadpole then\begin{equation}
\frac{\partial}{\partial\alpha_{k}}\mathcal{U}_{G}=\mathcal{U}_{G\backslash e_{k}},\,\mathcal{U}_{G}|_{\alpha_{k}=0}=\mathcal{U}_{G//e_{k}},\label{eq:del contr 1}\end{equation}
\begin{equation}
\frac{\partial}{\partial\alpha_{k}}\mathcal{F}_{G}=\mathcal{F}_{G\backslash e_{k}},\,\mathcal{F}_{G}|_{\alpha_{k}=0}=\mathcal{F}_{G//e_{k}}.\label{eq:del contr 2}\end{equation}
In the following let us call a Feynman graph $G$ linearly reducible,
if $\{\mathcal{U}_{G},\,\mathcal{F}_{G}\}$ is linearly reducible.
We arrive at the main statement of these notes:
\begin{theorem}
If $G$ is a linearly reducible Feynman graph then any minor of $G$
is linearly reducible as well.\label{theorem:Minor closed}\end{theorem}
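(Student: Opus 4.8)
The plan is to reduce the statement about an arbitrary minor $\gamma=G\backslash D//C$ to a statement about a single edge. Since deletions and contractions of distinct edges commute, any minor is reached from $G$ by a finite sequence of single deletions and single contractions, each applied to some intermediate minor. Hence it suffices to show that if a Feynman graph is linearly reducible, then so are $G\backslash e$ and $G//e$ for every edge $e$; the general case then follows by induction on $|D|+|C|$. Throughout I identify ``$G$ is linearly reducible'' with ``$\{\mathcal{U}_{G},\mathcal{F}_{G}\}$ is linearly reducible'' and work with whichever intermediate graph the induction has already certified to be reducible, rather than re-deriving reducibility from the standing assumption of a massless edge.

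The core case is an edge $e$ which is massless and is neither a tadpole nor a bridge. Here $\mathcal{U}_{G}$ and $\mathcal{F}_{G}$ are both linear in $\alpha_{e}$, so Lemma~\ref{lemma:Lin red Ableitung und Restriktion} applies with $l=e$: the derivative set $S^{e}=\{\partial_{\alpha_e}\mathcal{U}_{G},\,\partial_{\alpha_e}\mathcal{F}_{G}\}$ and the restriction set $S_{e}=\{\mathcal{U}_{G}|_{\alpha_e=0},\,\mathcal{F}_{G}|_{\alpha_e=0}\}$ are both linearly reducible. By the deletion/contraction identities \ref{eq:del contr 1} and \ref{eq:del contr 2} these two sets are exactly $\{\mathcal{U}_{G\backslash e},\mathcal{F}_{G\backslash e}\}$ and $\{\mathcal{U}_{G//e},\mathcal{F}_{G//e}\}$, so both $G\backslash e$ and $G//e$ are linearly reducible. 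This is the heart of the argument; everything else is bookkeeping for the degenerate edges. If $e$ is a tadpole then $\mathcal{U}_{G}=\alpha_{e}\mathcal{U}_{G\backslash e}$ and $\mathcal{F}_{G}=\alpha_{e}\mathcal{F}_{G\backslash e}$, and since the reduction discards monomial factors the factor $\alpha_{e}$ is irrelevant, so reducibility passes to $\{\mathcal{U}_{G\backslash e},\mathcal{F}_{G\backslash e}\}$ (and $G//e=G\backslash e$). If $e$ is a bridge then its deletion produces a disconnected graph, excluded by convention, while its contraction is covered by the restriction argument because $\mathcal{U}_{G}$ and $\mathcal{F}_{G}$ do not depend on $\alpha_{e}$ at all.

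The remaining and genuinely delicate case is a \emph{massive} edge, where $\mathcal{F}_{G}$ is quadratic in $\alpha_{e}$ and Lemma~\ref{lemma:Lin red Ableitung und Restriktion} no longer applies. Contraction still works: passing to projective coordinates and using the restriction $x_{e}=0,\,y_{e}=1$ needs no linearity, and since $\mathcal{F}_{G}|_{\alpha_e=0}=\mathcal{F}_{G//e}$ holds irrespective of the mass, the restriction lemma for $\bar S$ combined with the change-of-coordinates lemma yields that $\{\mathcal{U}_{G//e},\mathcal{F}_{G//e}\}$ is linearly reducible. Deletion is where the obstacle lies: the leading $\alpha_{e}$-coefficient of $\mathcal{F}_{G}$ equals $m_{e}^{2}\,\mathcal{U}_{G\backslash e}$ rather than $\mathcal{F}_{G\backslash e}$, so the projective restriction $x_{e}=1,\,y_{e}=0$ recovers only $\mathcal{U}_{G\backslash e}$ and loses the information in $\mathcal{F}_{G\backslash e}$.

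I would resolve the deletion of a massive edge by first specializing $m_{e}^{2}\to0$. This is legitimate because the minor $G\backslash e$ carries no mass on the removed edge, so $\mathcal{F}_{G\backslash e}$ is left unchanged, whereas $\mathcal{F}_{G}|_{m_e^2=0}$ becomes linear in $\alpha_{e}$ with $\partial_{\alpha_e}\bigl(\mathcal{F}_{G}|_{m_e^2=0}\bigr)=\mathcal{F}_{G\backslash e}$, returning us to the core case. The point that must be checked with care, and which I expect to be the main difficulty, is that setting a mass to zero preserves linear reducibility of $\{\mathcal{U}_{G},\mathcal{F}_{G}\}$: each reduction step, namely the derivative, the restriction, and the bilinear combinations $h_{i}g_{j}-g_{i}h_{j}$, commutes with the specialization, and passing to irreducible factors of a specialized polynomial can only refine the sets without raising any $\alpha$-degree, so the same variable ordering continues to certify reducibility. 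Making this commutation argument watertight is the step I expect to require the most work.
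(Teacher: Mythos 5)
Your proof is correct and follows essentially the same route as the paper's: single-edge induction, the deletion/contraction identities \ref{eq:del contr 1} and \ref{eq:del contr 2} combined with Lemma \ref{lemma:Lin red Ableitung und Restriktion} for the massless case, and the specialization $m_{e}^{2}\to 0$ to handle massive edges. Two remarks. First, the paper streamlines your case analysis: for an \emph{arbitrary} edge $e_{j}$ it passes at once to $\tilde{G}=G|_{m_{j}=0}$, observes $G\backslash e_{j}=\tilde{G}\backslash e_{j}$ and $G//e_{j}=\tilde{G}//e_{j}$, and then runs your ``core case'' on $\tilde{G}$; this removes the massless/massive split and the separate projective-restriction argument for massive contractions. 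The step you single out as the main difficulty --- that setting a mass to zero preserves linear reducibility --- is precisely the step the paper dispatches in one sentence (``$G$ is linearly reducible for any value of $m_{j}$, and therefore $\tilde{G}$ is linearly reducible as well''), so your commutation sketch is the honest content behind that sentence, and it is sound: specialization commutes with the derivative, the restriction and the bilinear combinations $h_{i}g_{j}-g_{i}h_{j}$, and under specialization irreducible factors can only refine further, so no degree in any $\alpha$-variable can increase. Second, a slip in a degenerate case: for a bridge $e$ it is \emph{not} true that $\mathcal{F}_{G}$ is independent of $\alpha_{e}$ --- the mass term $\mathcal{U}_{G}\,\alpha_{e}m_{e}^{2}$ and the two-forest terms in which the bridge is among the cut edges both contain $\alpha_{e}$ (linearly, since $\mathcal{U}_{G}$ is $\alpha_{e}$-free); your conclusion for $G//e$ nevertheless stands, because the restriction $\alpha_{e}=0$ argument you already use for massive contractions requires no such independence, so this is a faulty justification rather than a gap.
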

\begin{proof}
Let $G$ be an arbitrary Feynman graph, $e_{j}$ any of its edges
and $\tilde{G}=G|_{m_{j}=0}$ the graph obtained from $G$ by setting
the mass $m_{j}$ associated to $e_{j}$ equal to zero. Assume that
$G$ is linearly reducible. Then $G$ is linearly reducible for any
value of $m_{j}$ and therefore $\tilde{G}$ is linearly reducible
as well. As $e_{j}$ is massless in $\tilde{G}$ we can apply equations
\ref{eq:del contr 1} and \ref{eq:del contr 2} and obtain, if $e_{j}$ is not a tadpole, \begin{eqnarray*}
S_{\tilde{G}\backslash e_{j}} & = & \left\{ \mathcal{U}_{\tilde{G}\backslash e_{j}},\,\mathcal{F}_{\tilde{G}\backslash e_{j}}\right\} =\left\{ \frac{\partial}{\partial\alpha_{j}}\mathcal{U}_{\tilde{G}},\,\frac{\partial}{\partial\alpha_{j}}\mathcal{F}_{\tilde{G}}\right\} ,\\
S_{\tilde{G}//e_{j}} & = & \left\{ \mathcal{U}_{G//e_{j}},\,\mathcal{F}_{G//e_{j}}\right\} = \left\{ \mathcal{U}_{\tilde{G}}|_{\alpha_{j}=0},\,\mathcal{F}_{\tilde{G}}|_{\alpha_{j}=0}\right\}.
 \end{eqnarray*}
If $e_{j}$ is a tadpole we have 
\[ S_{\tilde{G}//e_{j}}=S_{\tilde{G}\backslash e_{j}}= \left\{ \frac{\partial}{\partial\alpha_{j}}\mathcal{U}_{\tilde{G}},\,\frac{\partial}{\partial\alpha_{j}}\mathcal{F}_{\tilde{G}}\right\}.\]
By lemma \ref{lemma:Lin red Ableitung und Restriktion} it follows from
the linear reducibility of $\tilde{G}$ that $\tilde{G}\backslash e_{j}$
and $\tilde{G}//e_{j}$ are linearly reducible. 
$G$ and $\tilde{G}$ have the same minors with respect to deleting
or contracting $e_{j}:$\[
G\backslash e_{j}=\tilde{G}\backslash e_{j},\, G//e_{j}=\tilde{G}//e_{j}\]
and therefore $S_{G\backslash e_{j}}=S_{\tilde{G}\backslash e_{j}}$
and $S_{G//e_{j}}=S_{\tilde{G}//e_{j}}.$ By induction, this proves
the theorem.
\end{proof}
A set of graphs $\mathcal{G}$ is called \emph{minor closed} if for
all $G\in\mathcal{G}$ every minor of $G$ belongs to $\mathcal{G}$
as well. Let $\mathcal{H}$ be any set of graphs and let $\mathcal{G}_{\mathcal{H}}$
be the set of all graphs which do not have a minor in the set $\mathcal{H}.$
Then the set $\mathcal{G}_{\mathcal{H}}$ is minor closed, and the
graphs in $\mathcal{H}$ are called \emph{forbidden minors} of $\mathcal{G}_{\mathcal{H}}.$
A theorem of Robertson and Seymour \cite{RobSey} states that any minor closed set
of graphs can be defined in such a way by a finite set of forbidden
minors. 

A graph $H$ is called a \emph{critical minor} of a set of graphs $\mathcal{G}$
if the minors of $H$ belong to $\mathcal{G}$ but $H$ does not. 
By removing a graph from $\mathcal{H}$ which is a minor of another graph in $\mathcal{H}$, we do not change the set $\mathcal{G}_{\mathcal{H}}$.
Therefore, to define a minor closed set $\mathcal{G}_{\mathcal{H}}$, it is sufficient to let $\mathcal{H}$ consist only of critical minors. 
A well-known example for a characterization by critical minors is
given by the set of planar graphs. Due to a theorem of Wagner \cite{Wag}, the
set of planar graphs is the set with forbidden minors $\{K_{3,3},\, K_{5}\}.$ 

In general we have to be careful when adapting notions from pure graph
theory to the study of Feynman graphs, which are equipped with labels
and special properties. However, theorem \ref{theorem:Minor closed} suggests to
attempt an analogous characterization of the set of linearly reducible
Feynman graphs by critical minors, i. e. by a set of Feynman graphs
which are not linearly reducible but have linearly reducible minors.
We conclude these notes with a first step towards such a characterization
for Feynman graphs with a dependence on external momenta. Let $\Lambda$ be the set of massless Feynman graphs with four on-shell
legs, attached to four distinct vertices. The on-shell condition restricts
the four external momenta to satisfy $p_{i}^{2}=0$ such that the
dependence on the external momenta can be expressed by two Mandelstam
variables. By use of a Maple-implementation of the Fubini algorithm, we find linear reductions for 
all two-loop graphs of this class. At three loops there are several graphs in this class for which our program fails
to find a linear reduction. Most of these graphs have the graph of figure \ref{fig:K4} as a minor. This is the complete four-vertex graph 
$K_{4}$ where we attached an on-shell leg at each vertex. From direct observation of the two Symanzik polynomials one can see that the 
graph is not linearly reducible. As on the other hand all its minors are linearly reducible, the graph plays the role of a critical minor for 
the set of all linearly reducible graphs. Our case study suggests that only a few further critical minors are 
needed to distinguish all linearly reducible graphs of $\Lambda$ at three loops. In this way, a large class of graphs can be separated into linearly reducible
and irreducible members by knowing only a small number of critical minors.

\begin{figure}
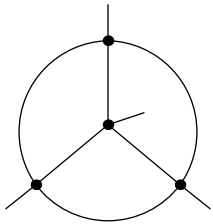

\begin{feynartspicture}(100,100)(1, 1)%

\FADiagram{}
\FAProp(10.,16.)(4.,4.)(0.6,){Straight}{0} \FALabel(2.7172,11.9013)[br]{} \FAProp(4.,4.)(16.,4.)(0.5,){Straight}{0} \FALabel(10.,0.1799)[t]{} \FAProp(16.,4.)(10.,16.)(0.6,){Straight}{0} \FALabel(17.2827,11.9013)[bl]{} \FAProp(10.,16.)(10.,9.)(0.,){Straight}{0} \FALabel(9.18,12.5)[r]{} \FAProp(4.,4.)(10.,9.)(0.,){Straight}{0} \FALabel(7.3522,5.9813)[tl]{} \FAProp(10.,9.)(16.,4.)(0.,){Straight}{0} \FALabel(12.6477,5.9813)[tr]{} \FAProp(10.,9.)(13.,10.)(0.,){Straight}{0} \FALabel(11.911,8.7467)[t]{} \FAProp(10.,16.)(10.,19.)(0.,){Straight}{0} \FALabel(10.82,17.5)[l]{} \FAProp(16.,4.)(18.5,2.)(0.,){Straight}{0} \FALabel(16.9178,2.4648)[tr]{} \FAProp(4.,4.)(1.5,2.)(0.,){Straight}{0} \FALabel(2.4178,3.5351)[br]{} \FAVert(10.,16.){0} \FAVert(4.,4.){0} \FAVert(16.,4.){0} \FAVert(10.,9.){0} 

\end{feynartspicture}%

\caption{$K_{4}$ with four legs \label{fig:K4}}

\end{figure}

%=========================================================================================================

\section{Conclusions}

In this talk we reviewed the criterion of linear reducibility of Symanzik polynomials, which can be used to decide whether 
a corresponding Feynman integral can be computed by iteratively introducing hyperlogarithms or multiple polylogarithms. 
Brown showed in \cite{Bro09} that with respect to the first Symanzik polynomial, the set of linearly reducible graphs is closed
 under taking minors and that therefore a characterization of this set by forbidden, critical minors is possible. We extended this line of argument 
to the case of graphs with masses and kinematical invariants, by showing, that minor closedness is true for both Symanzik polynomials. We exhibit a first 
critical minor with a non-trivial dependence on kinematical invariants. 

We expect a classification with respect to the criterion of linear reducibility to be useful for the more difficult question of which Feynman integrals
evaluate to multiple polylogarithms. Due to the simplicity of its derivation and the property of minor closedness, such a systematic classification 
of a large class of Feynman graphs appears feasible.

%    Bibliographies can be prepared with BibTeX using amsplain,
%    amsalpha, or (for "historical" overviews) natbib style.
\bibliographystyle{amsplain}
%    Insert the bibliography data here.

\end{document}